\documentclass[pra,twocolumn,superscriptaddress]{revtex4}
\usepackage{amsmath,color,graphics,graphicx,theorem}
\usepackage{pifont}
\usepackage{amsfonts}
\usepackage{multirow}
\usepackage{rotating}
\usepackage{array}
\usepackage{enumerate}

\usepackage[utf8]{inputenc}
\usepackage[T1]{fontenc}
\usepackage{lmodern}

\newtheorem{definition}{Definition}
\newtheorem{proposition}[definition]{Proposition}
\newtheorem{lemma}[definition]{Lemma}

\newtheorem{theorem}[definition]{Theorem}
\newtheorem{corollary}[definition]{Corollary}
\newtheorem{conjecture}[definition]{Conjecture}

\newtheorem{remark}[definition]{Remark}
\newtheorem{example}[definition]{Example}
\newtheorem{question}[definition]{Question}

\def\squareforqed{\hbox{\rlap{$\sqcap$}$\sqcup$}}
\def\qed{\ifmmode\squareforqed\else{\unskip\nobreak\hfil
\penalty50\hskip1em\null\nobreak\hfil\squareforqed
\parfillskip=0pt\finalhyphendemerits=0\endgraf}\fi}
\def\endenv{\ifmmode\;\else{\unskip\nobreak\hfil
\penalty50\hskip1em\null\nobreak\hfil\;
\parfillskip=0pt\finalhyphendemerits=0\endgraf}\fi}
\newenvironment{proof}{\noindent \textbf{{Proof.~} }}{\qed}
\def\Dbar{\leavevmode\lower.6ex\hbox to 0pt
{\hskip-.23ex\accent"16\hss}D}
\makeatletter
\def\url@leostyle{%
  \@ifundefined{selectfont}{\def\UrlFont{\sf}}{\def\UrlFont{\small\ttfamily}}}
\makeatother
\urlstyle{leo}

\def\bcj{\begin{conjecture}}
\def\ecj{\end{conjecture}}
\def\bcr{\begin{corollary}}
\def\ecr{\end{corollary}}
\def\bd{\begin{definition}}
\def\ed{\end{definition}}
\def\bem{\begin{enumerate}}
\def\eem{\end{enumerate}}
\def\bex{\begin{example}}
\def\eex{\end{example}}
\def\bim{\begin{itemize}}
\def\eim{\end{itemize}}
\def\bl{\begin{lemma}}
\def\el{\end{lemma}}
\def\bpf{\begin{proof}}
\def\epf{\end{proof}}
\def\bpp{\begin{proposition}}
\def\epp{\end{proposition}}
\def\bqu{\begin{question}}
\def\equ{\end{question}}
\def\br{\begin{remark}}
\def\er{\end{remark}}
\def\bt{\begin{theorem}}
\def\et{\end{theorem}}
\def\btb{\begin{tabular}}
\def\etb{\end{tabular}}

\newcommand{\nc}{\newcommand}


\def\a{\alpha}\def\b{\beta}\def\g{\gamma}\def\r{\rho}\def\s{\sigma}\def\ph{\varphi}\def\c{\chi}\def\ps{\psi}\def\o{\omega}

\def\G{\Gamma}\def\L{\Lambda}

\nc{\bA}{{\bf A}} \nc{\bB}{{\bf B}} \nc{\bC}{{\bf C}} \nc{\bD}{{\bf D}} \nc{\bE}{{\bf E}} \nc{\bF}{{\bf F}} \nc{\bG}{{\bf G}} \nc{\bH}{{\bf H}} \nc{\bI}{{\bf I}} \nc{\bJ}{{\bf J}} \nc{\bK}{{\bf K}} \nc{\bL}{{\bf L}} \nc{\bM}{{\bf M}} \nc{\bN}{{\bf N}} \nc{\bO}{{\bf O}} \nc{\bP}{{\bf P}} \nc{\bQ}{{\bf Q}} \nc{\bR}{{\bf R}} \nc{\bS}{{\bf S}} \nc{\bT}{{\bf T}} \nc{\bU}{{\bf U}} \nc{\bV}{{\bf V}} \nc{\bW}{{\bf W}} \nc{\bX}{{\bf X}}
\nc{\bZ}{{\bf Z}}


\nc{\cA}{{\cal A}} \nc{\cB}{{\cal B}} \nc{\cC}{{\cal C}}
\nc{\cD}{{\cal D}} \nc{\cE}{{\cal E}} \nc{\cF}{{\cal F}}
\nc{\cG}{{\cal G}} \nc{\cH}{{\cal H}} \nc{\cI}{{\cal I}}
\nc{\cJ}{{\cal J}} \nc{\cK}{{\cal K}} \nc{\cL}{{\cal L}}
\nc{\cM}{{\cal M}} \nc{\cN}{{\cal N}} \nc{\cO}{{\cal O}}
\nc{\cP}{{\cal P}} \nc{\cQ}{{\cal Q}} \nc{\cR}{{\cal R}}
\nc{\cS}{{\cal S}} \nc{\cT}{{\cal T}} \nc{\cU}{{\cal U}}
\nc{\cV}{{\cal V}} \nc{\cW}{{\cal W}} \nc{\cX}{{\cal X}}
\nc{\cZ}{{\cal Z}}


\nc{\hA}{{\hat{A}}} \nc{\hB}{{\hat{B}}} \nc{\hC}{{\hat{C}}}
\nc{\hD}{{\hat{D}}} \nc{\hE}{{\hat{E}}} \nc{\hF}{{\hat{F}}}
\nc{\hG}{{\hat{G}}} \nc{\hH}{{\hat{H}}} \nc{\hI}{{\hat{I}}}
\nc{\hJ}{{\hat{J}}} \nc{\hK}{{\hat{K}}} \nc{\hL}{{\hat{L}}}
\nc{\hM}{{\hat{M}}} \nc{\hN}{{\hat{N}}} \nc{\hO}{{\hat{O}}}
\nc{\hP}{{\hat{P}}} \nc{\hR}{{\hat{R}}} \nc{\hS}{{\hat{S}}}
\nc{\hT}{{\hat{T}}} \nc{\hU}{{\hat{U}}} \nc{\hV}{{\hat{V}}}
\nc{\hW}{{\hat{W}}} \nc{\hX}{{\hat{X}}} \nc{\hZ}{{\hat{Z}}}






















\def\bisep{\textup{BS}}

\def\FS{\textup{FS}}





\def\dim{\mathop{\rm Dim}}



\def\min{\mathop{\rm min}}





\def\tr{\mathop{\rm Tr}}

\def\GL{{\mbox{\rm GL}}}




\def\bigox{\bigotimes}
\def\dg{\dagger}

\def\ox{\otimes}

\def\su{\subset}
\def\sue{\subseteq}

\newcommand{\bra}[1]{\langle#1|}
\newcommand{\ket}[1]{|#1\rangle}
\newcommand{\proj}[1]{| #1\rangle\!\langle #1 |}
\newcommand{\ketbra}[2]{|#1\rangle\!\langle#2|}

\newcommand{\norm}[1]{\lVert#1\rVert}
\newcommand{\abs}[1]{|#1|}
\newcommand{\ot}{\otimes}





\newcommand{\jmp}{J. Math. Phys.~}

\begin{document}
\title{Role of correlations in the two-body-marginal problem}

\author{Lin Chen}
\email{linchen0529@gmail.com}
\affiliation{Singapore University of Technology and Design, 20 Dover Drive, Singapore
138682}
\affiliation{Institute for Quantum Computing \& Department of Pure Mathematics,
University of Waterloo, 200 University Avenue West, N2L 3G1 Waterloo, Ontario, Canada}

\author{Oleg Gittsovich}
\email{oleg.gittsovich@univie.ac.at}
\affiliation{Institute for Quantum Computing \& Department of Physics and Astronomy,
University of Waterloo, 200 University Avenue West, N2L 3G1 Waterloo, Ontario, Canada}
\affiliation{Institute for Quantum Optics and Quantum Information, Austrian Academy of Sciences, Technikerstr. 21a, A-6020 Innsbruck, Austria}
\affiliation{Institute for Theoretical Physics, University of Innsbruck, Technikerstr. 25, A-6020 Innsbruck, Austria}

\author{K. Modi}
\email{kavan.modi@monash.edu}
\affiliation{School of Physics, Monash University, Victoria 3800, Australia}

\author{Marco Piani}
\email{mpiani@uwaterloo.ca}
\affiliation{Institute for Quantum Computing \& Department of Physics and Astronomy,
University of Waterloo, 200 University Avenue West, N2L 3G1 Waterloo, Ontario, Canada}
\affiliation{Department of Physics and SUPA, University of Strathclyde, Glasgow G4 0NG, UK}

\begin{abstract}
Quantum properties of correlations
have a key role in disparate fields of physics, from quantum
information processing, to quantum foundations, to strongly
correlated systems. We tackle a specific aspect of the fundamental
quantum marginal problem: we address the issue of deducing the
global properties of correlations of tripartite quantum states based
on the knowledge of their bipartite reductions, focusing on relating
specific properties of bipartite correlations to global correlation
properties. We prove that strictly classical bipartite correlations
may still require global entanglement and that unentangled---albeit
not strictly classical---reductions may require global genuine
multipartite entanglement, rather than simple entanglement. On the
other hand, for three qubits, the strict classicality of the
bipartite reductions rules out the need for genuine multipartite
entanglement. Our work sheds new light on the relation between local
and global properties of quantum states, and on the interplay
between classical and quantum properties of correlations.
\end{abstract}

\date{ \today }

\maketitle

\section{Introduction}

Quantum correlations have a central role in
quantum information processing, in quantum foundations, as well as
in the physics of strongly correlated systems~\cite{hhh09, mbc12,
bcp13,revarealaw}. On one hand, quantum correlations, and in
particular entanglement, are a resource that allows one to go beyond
what classically possible in many scenarios, from communication
tasks, to (measurement-based) quantum computing, to quantum
cryptography. On the other hand, the non-classicality of quantum
correlations---be it in the form of non-locality, steering,
entanglement, or discord---is one of the most distinctive traits of
quantum mechanics, and challenges our understanding of quantum
mechanics itself. The interplay between local and global properties
of quantum states is a key aspect in the study of quantum
correlations, both from a fundamental perspective and an applicative
one. For example, we may want to certify the presence of
multipartite entanglement in large systems without the---often
inaccessible---knowledge of the global state, using instead only the
information that comes from reduced states. On the other hand, in
condensed-matter physics, because of the typically local---e.g.,
two-body---interactions, relevant properties are dictated by the
interplay between the allowed reduced states and global
correlations, giving raise to phenomena like
frustration~\cite{frustration}. The general study of the relations
between the properties of the reduced states and the properties of
the global state is known as the quantum marginal problem, which has
seen a growing interest in the past years also for the reasons
above~\cite{bravyi04a, klyachko04, christandl06,
christandl12, chen12, jv2013, cjk13}.

In this work we study what can be inferred about the quality of
the correlations of the global state given information about the
two-body reduced states, aiming at answering the question: \emph{What correlations need to be present globally to explain what we see locally?} In~\cite{christandlscience} a
characterization of multipartite entanglement in terms of even just
single-party reduced states (actually, single-party spectra) was
given, but under the assumption of dealing with a pure or quasi-pure
global state. In~\cite{guhne1, guhne2, wba12} the possibility of
dealing with global mixed states is taken into account, and examples
are given where two-qubit separable states are only compatible with
global entanglement, intended in the sense of lack of total
separability (see Section~\ref{sec:definitions} for definitions). In~\cite{wba12} examples
are also given where genuine multipartite entanglement---a much
stronger notion of global entanglement---can still be deduced from
the properties of the two-body reduced states, but only when these
reduced states exhibit bipartite entanglement themselves.

In this work we present several results that complement and generalize those of~~\cite{guhne1, guhne2, wba12}.
We offer a brief summary of our findings in Table~\ref{table}.
Firstly, we provide examples of triples of bipartite reduced states---in
the simplest case, two-qubit states---that, albeit separable, are
only compatible with genuine tripartite entanglement (lower-right corner of Table~\ref{table}). As far as we
know, this is the strongest ``gap'' known between the entanglement
properties of the marginals and of the global state. Secondly, we
address the issue of relating the general quantumness of
correlations~\cite{mbc12} of reduced states to the quantum
correlations of the global state. We find that strictly classical
reduced states may still be compatible only with global
entanglement (upper-left corner of Table~\ref{table}). Thirdly, we find that, at least for qubits,
the strict classicality of the two-body correlations makes it
impossible to certify genuine tripartite entanglement based on the
knowledge of the reductions: strictly classical two-qubit reduced
states are always compatible with a global state that is not genuine
tripartite entangled (upper-right corner of Table~\ref{table}).

The rest of the paper is organized as follows. In Section~\ref{sec:definitions}  we define the relevant notions of correlations and classicality, and of compatibility of two-body reduced states in tripartite systems. In Section~\ref{sec:classicalcomp} we study the relation between the classicality of reductions and their compatibility. In Section~\ref{sec:genuine} we prove that unentangled reduced states may only be compatible with genuine multipartite entanglement at the level of the global state. 
Finally, we conclude in Section~\ref{sec:conclusions}.

\begin{table}
\begin{tabular}{l l|>{\centering}m{1.5cm}|>{\centering}m{3.5cm}|}
\cline{3-4}
& & \multicolumn{2}{c|}{Global state}\\
\cline{3-4}
 &  & {Entangled  } & \begin{minipage}{3.5cm}{\vspace{2mm} Genuinely multipartite\\ entangled}\vspace{2mm}\end{minipage} \cr
  \hline
\multicolumn{1}{|c|}
{\multirow{2}{*}{\begin{sideways}{Reductions}\end{sideways}}}
&\begin{minipage}{1.5cm}\vspace{2mm}Fully\\ classical \vspace{2mm}\end{minipage} & \ding{52} & \ding{56} (qubits)\cr
\cline{2-4}
\multicolumn{1}{|c|}{}
&\begin{minipage}{1.5cm}\vspace{4mm}Separable\vspace{4mm} \end{minipage}   & \ding{51}~\cite{guhne1, guhne2, wba12} & \ding{52}  \cr
\cline{2-4}
\hline
\end{tabular}
\caption{A \ding{51} in a cell means that there exist two-body marginal states with the quality of correlations listed in the corresponding row, that are only compatible with global states that have \emph{at least} the property of correlations listed in the corresponding column. A \ding{55} means that the inference is not possible: specifically, there are no fully classical two-qubits states that are only compatible with genuine tripartite entanglement (refer to the main text for definitions). Our results correspond to bold symbols, \ding{52} and \ding{56}. Previous results are reported for completeness and comparison. \label{table}}
\end{table}

\section{Correlations and compatibility}
\label{sec:definitions}

We begin by formally defining qualitatively different types of correlations.
\bd\label{def:separable} Any tripartite mixed state can be written
as a mixture of an ensemble of pure states as $\r_{ABC}=\sum_{i} p_i
\ket{\psi_i} \bra{\psi_i}_{ABC}$. We say that $\r_{ABC}$ is:
\begin{itemize}
\item
\emph{fully separable}, if we can take each $\ket{\psi_i}_{ABC}$ to
be fully factorized, e.g. $\ket{\alpha_i}_A
\ket{\beta_i}_B\ket{\gamma_i}_C$;
\item
\emph{biseparable}, if we can take each $\ket{\psi_i}_{ABC}$ to be
unentangled in at least one partition, e.g.,
$\ket{\alpha_i}_A\ket{\phi_i}_{BC}$, $\ket{\beta_i}_B
\ket{\phi}_{AC}$, or $\ket{\gamma_i}_C\ket{\phi}_{AB}$;
\item
\emph{genuinely multipartite entangled}, if for any ensemble there
is at least one $\ket{\psi_i}$ with $p_i>0$ that is not factorized
with respect to any bipartition, i.e., if $\r_{ABC}$ is not
biseparable;
\item
\emph{fully classical}, if we can take each $\ket{\psi_i}_{ABC}$ to
be of the form $\ket{a_i}_A\ket{b_j}_B\ket{c_k}_C$, with
$\{\ket{a_i}\},\; \{\ket{b_j}\} \; \{\ket{c_k}\}$ orthonormal basis
on $\cH_A$,  $\cH_B$, and $\cH_C$, respectively, so that, overall,
$\r_{ABC}=\sum_{ijk} p_{ijk} \proj{a_i b_j c_k}$.
\end{itemize}
Bipartite full classicality and separability are defined similarly:
$\rho_{AB}$ is \emph{fully classical} if $\rho_{AB}=\sum_{ij} p_{ij}
\proj{a_i b_j}$, for $\{\ket{a_i}\},\; \{\ket{b_j}\}$ orthonormal
bases, and \emph{separable} if $\rho_{AB}= \sum_i p_i \proj{\alpha_i
\beta_i}$. A  bipartite state is \emph{entangled} if it is not separable.
The notions of full separability and biseparability are redundant
for bipartite states. \ed

The set of fully classical states, $\mathcal{S}_{\textrm{FC}}$, is a
subset of the set of fully separable states,
$\mathcal{S}_{\textrm{FS}}$, which in turn is a subset of the set of
biseparable states, $\mathcal{S}_{\textrm{BS}}$; the set of
genuinely multipartite entangled states,
$\mathcal{S}_{\textrm{GME}}$ is the is the complement of
$\mathcal{S}_{\textrm{BS}}$ in the space of all states,
$\mathcal{S}_{\textrm{ALL}}$ (see Figure~\ref{fig:sets}). All the
mentioned sets apart from $\mathcal{S}_{\textrm{FC}}$ and
$\mathcal{S}_{\textrm{GME}}$ are  convex.  A biseparable state may
be either separable or entangled in any bipartition,  but it is by
definition the convex sum of three states that are each separable in
one of the bipartitions $A:BC$, $B:AC$, and $C:AB$.
\begin{figure}
\includegraphics[width=0.4 \textwidth]{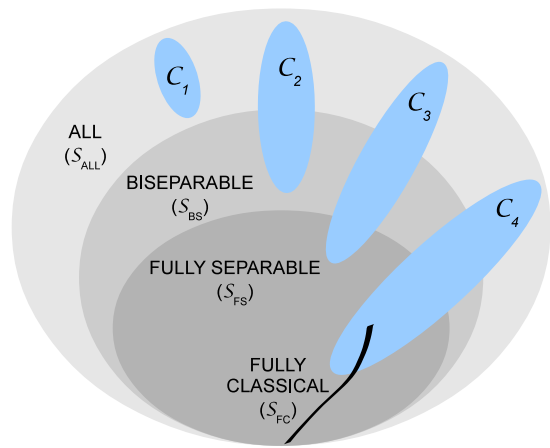}
\caption{Hierarchy of correlation classes, and some possible
compatibility sets.  The set $\mathcal{S}_{\textrm{FC}}$ is denoted
by the line. The set $\mathcal{S}_{\textrm{GME}}$ of genuinely multipartite states is the complement of
$\mathcal{S}_{\textrm{BS}}$ in the space of all states, $\mathcal{S}_{\textrm{ALL}}$. The reductions corresponding to the compatibility set
$\mathcal{C}_1$ are only compatible with genuine multipartite
entanglement. The reductions of the states in compatibility set
$\mathcal{C}_2$ are compatible with entangled biseparable
states and genuinely multipartite entangled states, but not separable states. The reductions defining $\mathcal{C}_3$ are compatible  with
a fully separable state as well as with entangled states, but not
fully-classical states.} \label{fig:sets}
\end{figure}
We now move to define formally the notion of compatibility for reduced states.

\bd \label{def:compatible}  Given a triple of three two-party states
$\mathcal{E}=(\rho_{AB},\rho_{AC},\rho_{BC})$, its
\emph{compatibility set} is defined as
$\mathcal{C}(\mathcal{E}):=\{\sigma_{ABC}\in\mathcal{S}_\textrm{ALL}|\sigma_{ij}=\rho_{ij},
ij=AB,AC,BC\}$.  Any compatibility set is a convex
set~\cite{chen12}, and the property of being part of a given
compatibility set defines an equivalence relation. We find it useful
to denote by $\mathcal{C}(\rho_{ABC})$ the compatibility set
associated with the reduced states of $\rho_{ABC}$, i.e., the set of
all states that have the same reductions as $\rho_{ABC}$.
A triple of two-party states
$\mathcal{E}$ is said to be \emph{compatible} (so that we refer to
the triple as \emph{triple of reductions}) if
$\mathcal{C}(\mathcal{E})\neq\emptyset$, i.e., if there is at least
one global state with those reductions. \ed

The following definition links the compatibility of reduced states to the correlation properties of global states.

\bd \label{def:compatible2} We say that the reductions
$\mathcal{E}$ are \emph{incompatible with a set $\mathcal{S}$} (or
with the defining correlation property of $\mathcal{S}$) if
$\mathcal{C}(\mathcal{E})\cap\mathcal{S}=\emptyset$. We say that (a
compatible) $\mathcal{E}$ is only compatible with genuine
multipartite entanglement if it is incompatible with
$\mathcal{S}_{\textrm{BS}}$. \ed

Figure~\ref{fig:sets} illustrates the problem of deciding whether certain bipartite reductions necessarily require the presence of global correlations of a certain kind.

We can always expand a generic tripartite state as
\begin{align}\label{eq:statecorrtens}
\r_{ABC}=& \r_A \otimes \r_B \otimes \r_C + \chi_{ABC} \nonumber\\
&+ \chi_{AB} \otimes \frac{\openone_C}{d_C} + \chi_{AC} \ox \frac{\openone_B}{d_B}
+  \chi_{BC} \ox \frac{\openone_A}{d_A}
\end{align}
where $\r_k$ is the reduced states of party $k$ and $\openone_k/d_k$ is the normalized identity operator on the Hilbert space $\mathcal{H}_k$. The bipartite correlation matrices $\chi_{kl}$ can be defined via $\r_{jk} = \rho_j \otimes \rho_k + \chi_{jk}$, and satisfy ${\tr}_k[\chi_{kl}] = {\tr}_l [\chi_{kl}]=0$. It is worth noting that when a bipartite marginal state $\r_{jk}$ is fully classical, then $[\r_j \otimes \r_k , \chi_{jk}] = 0$, and $\chi_{jk}$ is also diagonal in the same product basis as $\rho_{jk}$. The tripartite correlation matrix $\chi_{ABC}$, which for a fixed $\r_{ABC}$ can be defined via \eqref{eq:statecorrtens}, satisfies ${\tr}_k[\chi_{ABC}] = 0$, for all $k\in\{A,B,C\}$. For compatible reductions $\mathcal{E} = (\r_{AB},\r_{AC},\r_{BC})$, the compatibility set $\mathcal{C}(\mathcal{E})$ is spanned by choosing the tripartite correlation matrix $\chi_{ABC}$ so that the resulting operator in \eqref{eq:statecorrtens} is a physical state, i.e., positive semidefinite. On the other hand, to determine whether a triple of bipartite states is compatible, we first check the basic necessary condition that the single party marginals be the same, i.e., $\tr_j[\rho_{ij}] = \tr_k[\rho_{ik}]$ for all $\{i,j,k\} \in \{A,B,C\}$. Next, we have to  search for a tripartite correlation matrix,
$\chi_{ABC}$, such that Eq.~\eqref{eq:statecorrtens} is physical. If no such $\chi_{ABC}$ exists,  the given states are not compatible.

\section{Classicality of reductions and global entanglement}
\label{sec:classicalcomp}

Consider the
marginals from the well-known Greenberger-Horne-Zeilinger (GHZ)
state $\ket{\textrm{GHZ}}= (\ket{000} + \ket{111})/\sqrt{2}$:
$\r_{AB} = \r_{BC} = \r_{AC} =\frac{1}{2} (\proj{00} + \proj{11})$.
These are fully classical marginals coming from a genuinely
tripartite entangled state. However, these marginals are also
compatible with $\frac{1}{2} (\proj{000} + \proj{111})$, which is
fully-classical.

In this section we will first provide an example where fully classical two-body reduced states are not compatible with a global fully classical state, and actually require the presence of entanglement. On the other hand, we will prove that, in the case of three qubits, the fully classical two-body reductions  are always compatible with a global states that is not genuine multipartite entangled.

\subsection{Two-body classical states may require global quantumness of correlations}

 We will now derive conditions to ensure that
some fully-classical marginals cannot be compatible with any global
fully-classical state. We start with the following lemma.

\bl \label{le:commutatorlemma} Suppose three states $\r_{AB}$,
$\r_{AC}$, and $\r_{BC}$ are fully classical. Consider the
commutators $\Delta_{ij,ik}=\left[ \chi_{ij} \ox \openone_k ,\;
\chi_{ik} \ox \openone_j \right] = \left[ \rho_{ij} \ox \openone_k
,\; \rho_{ik} \ox \openone_j \right]$, where the second equality is
due to the assumed classicality, i.e., $[\r_j \otimes \r_k ,
\chi_{jk}] =0$. Then: 
\begin{enumerate}[(i)]
\item All commutators $\Delta_{ij,ik}$ vanish
if and only if there are orthonormal basis $\{\ket{a_i}\}$,
$\{\ket{b_i}\}$, $\{\ket{c_i}\}$, such that $\r_{AB} = \sum_{ij}
p_{ij}\proj{a_i,b_j}$, $\r_{BC} = \sum_{ij} q_{ij}\proj{b_i,c_j}$,
and $\r_{AC} = \sum_{ij} r_{ij}\proj{a_i,c_j}$.
\item If some
commutator $\Delta_{ij,ik}$ does not vanish, then:
\begin{enumerate}[(a)]
\item at least one
$\r_i$ is degenerate (i.e., at least two eigenvalues of some $\r_i$
are the same);
\item there does not exist a tripartite fully classical
state that is compatible with $\r_{AB}$, $\r_{BC}$, and $\r_{AC}$.
\end{enumerate}
\end{enumerate}
\el


\bpf (i) The ``if'' part is trivial. Let us prove the ``only if''
part. By hypothesis we may assume
 \begin{align}
\r_{AB} =& \sum_{ij}p_{ij}\proj{a_i,b_j}  \label{eq:appendixAB} \notag\\
=& \sum_{j}p_{j}\a_j\ox\proj{b_j} = \sum_{i}p_{i}'\proj{a_i}\ox\b_i, \\
\r_{AC} =& \sum_{ij}r_{ij}\proj{a_i',c_j} \label{eq:appendixAC} \notag\\
=& \sum_{j}r_{j}\a_j'\ox\proj{c_j} = \sum_{i}r_{i}'\proj{a_i'}\ox\g_i, \\
\r_{BC} =& \sum_{ij}q_{ij}\proj{b_i',c_j'} \label{eq:appendixBC} \notag\\
=& \sum_{j}q_{j}\b_j'\ox\proj{c_j'} = \sum_{i}q_{i}'\proj{b_i'}\ox\g_i',
\end{align}
with the orthonormal basis $\{\ket{a_i}\},\{\ket{a_i'}\}$ on $\cH_A$,
$\{\ket{b_i}\},\{\ket{b_i'}\}$ on $\cH_B$,
$\{\ket{c_i}\},\{\ket{c_i'}\}$ on $\cH_C$, and $\alpha_j = \sum_i p_{ij} \proj{a_i} / p_j$, $p_j=\sum_i p_{ij}$ (similarly for $\beta_j$, etc). Then $\Delta_{ij,ik} = 0$ implies
 \begin{align}
&\left[ \c_{ij} \ox \openone_k, \; \c_{ik} \ox \openone_j \right]
\label{eq:commute} \notag\\
& \quad=\left[ (\r_{ij}-\r_i\ox\r_j) \ox \openone_k, \; (\r_{ik}-\r_i\ox\r_k) \ox \openone_j \right] \notag\\
& \quad=\left[ \r_{ij} \ox \openone_k, \; \r_{ik} \ox \openone_j \right]=0,
\end{align}
with $i,j,k\in\{A,B,C\}$. By setting $i=A$ in \eqref{eq:commute}, we have $[\a_s,\a_t']=0$, $\forall \; s,t$. Thus the states $\a_s,\a_t'$ are simultaneously diagonalizable in the orthonormal basis $\{\ket{a_i''}\}$. So we may replace the basis $\{\ket{a_i}\}$ and $\{\ket{a_i'}\}$ in \eqref{eq:appendixAB} and \eqref{eq:appendixAC} by $\{\ket{a_i''}\}$. This replacement may result in the change of $p_i',\b_i$ and $r_i',\g_i$. Since there is no confusion, we still use them in \eqref{eq:appendixAB} and \eqref{eq:appendixAC}.

Next by setting $i=B$ in \eqref{eq:commute}, we have $[\b_s,\b_t']=0$, $\forall \; s,t$. Thus the states $\b_s,\b_t'$ are simultaneously diagonalizable in the orthonormal basis $\{\ket{b_i''}\}$. So we may replace the basis $\{\ket{b_i}\}$ and $\{\ket{b_i'}\}$ in \eqref{eq:appendixAB} and \eqref{eq:appendixBC} by $\{\ket{b_i''}\}$. This replacement may result in the change of $q_i',\g_i'$. Since there is no confusion, we still use them in \eqref{eq:appendixBC}. Third we set $i=C$ in \eqref{eq:commute} and repeat the above argument to show that the basis $\{\ket{c_i}\}$ and $\{\ket{c_i'}\}$ in \eqref{eq:appendixAC} and \eqref{eq:appendixBC} can be replaced by the orthonormal basis $\{\ket{c_i''}\}$. So the assertion follows.

(ii) Suppose either condition (a) or (b) is violated. We have that  either $\r_A$, $\r_B$, and $\r_C$ are all non-degenerate respectively in the orthonormal basis $\{\ket{a_i}\}$, $\{\ket{b_i}\}$, $\{\ket{c_i}\}$ (violation of (a)), or that there exists a tripartite fully classical state $\sum_{i,j,k}f_{ijk}\proj{a_i,b_j,c_k}$ that is compatible with $\r_{AB}$, $\r_{BC}$, and $\r_{AC}$ (violation of (b)). In either case we have $\r_{AB} = \sum_{ij} p_{ij}\proj{a_i,b_j}$, $\r_{AC} = \sum_{ij} r_{ij} \proj{a_i,c_j}$ and $\r_{BC} = \sum_{ij} q_{ij}\proj{b_i,c_j}$. So (i) implies that all commutators $\Delta_{ij,kl}$ vanish, and we have a contradiction.
\epf


An immediate consequence of Lemma~\ref{le:commutatorlemma} is the following.

\begin{theorem}
\label{thm:commuteimpliesclassical}
Let $\r_{AB}$, $\r_{BC}$, and $\r_{AC}$ be three
compatible bipartite fully classical states, such that

\smallskip\noindent (i) they all commute (all commutators $\Delta_{ij,kl}$ of Lemma~\ref{le:commutatorlemma} vanish),

\smallskip\noindent or

\smallskip\noindent (ii) all three one-body reduced states $\r_A$, $\r_{B}$ and $\r_C$ are
non-degenerate.

\smallskip\noindent Then $\r_{AB}$, $\r_{BC}$, and $\r_{AC}$ are
compatible with a fully classical tripartite state.
\end{theorem}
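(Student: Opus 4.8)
My plan is to derive the statement almost directly from Lemma~\ref{le:commutatorlemma}, combined with a ``local dephasing'' trick applied to a witness of compatibility. First I would dispose of case (ii) by reducing it to case (i): if $\r_A$, $\r_B$, $\r_C$ are all non-degenerate, then the contrapositive of part (ii)(a) of Lemma~\ref{le:commutatorlemma} forces every commutator $\Delta_{ij,ik}$ to vanish, so hypothesis (ii) implies hypothesis (i) and it suffices to treat the latter.

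So assume all commutators vanish. Then part (i) of Lemma~\ref{le:commutatorlemma} (the ``only if'' direction) supplies orthonormal bases $\{\ket{a_i}\}$, $\{\ket{b_j}\}$, $\{\ket{c_k}\}$ on $\cH_A$, $\cH_B$, $\cH_C$ with respect to which each of $\r_{AB}$, $\r_{AC}$, $\r_{BC}$ is diagonal in the appropriate product basis. Because the triple $\mathcal{E}=(\r_{AB},\r_{AC},\r_{BC})$ is assumed compatible, I would fix some $\sigma_{ABC}\in\mathcal{C}(\mathcal{E})$ and apply to it the product of local completely-dephasing channels $\Lambda=\Lambda_A\ox\Lambda_B\ox\Lambda_C$, where $\Lambda_A(\cdot)=\sum_i\proj{a_i}(\cdot)\proj{a_i}$ and similarly for $B$ and $C$. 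The output $\tau_{ABC}:=\Lambda(\sigma_{ABC})$ is by construction diagonal in $\{\ket{a_i,b_j,c_k}\}$, hence fully classical.

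The remaining step is to check that this dephasing leaves the two-body marginals unchanged. For the $AB$ cut one writes $\tr_C[\tau_{ABC}]=(\Lambda_A\ox\Lambda_B)\big(\tr_C[(\mathrm{id}_{AB}\ox\Lambda_C)(\sigma_{ABC})]\big)$, using that $\Lambda_A\ox\Lambda_B$ commutes with $\tr_C$; since $\Lambda_C$ is trace-preserving this reduces to $(\Lambda_A\ox\Lambda_B)(\tr_C[\sigma_{ABC}])=(\Lambda_A\ox\Lambda_B)(\r_{AB})$, and $(\Lambda_A\ox\Lambda_B)(\r_{AB})=\r_{AB}$ precisely because $\r_{AB}$ is already diagonal in $\{\ket{a_i,b_j}\}$. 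The same computation for the other two bipartitions gives $\tr_B[\tau_{ABC}]=\r_{AC}$ and $\tr_A[\tau_{ABC}]=\r_{BC}$, so $\tau_{ABC}$ is a fully classical tripartite state compatible with $\r_{AB}$, $\r_{AC}$, $\r_{BC}$, which is what is claimed.

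I expect no serious obstacle. One might worry that a triple of consistent classical pairwise marginals need not admit \emph{any} classical joint distribution, but this is sidestepped because compatibility of $\mathcal{E}$ already hands us a (possibly quantum) global state to dephase. The only point requiring care is that the single application of local dephasing preserves all three reductions at once; this works exactly because Lemma~\ref{le:commutatorlemma}(i) produces one \emph{common} choice of local bases — if the reductions were diagonal only in mutually incompatible product bases, dephasing in bases adapted to one cut would in general spoil the other two, so it is the ``only if'' direction of Lemma~\ref{le:commutatorlemma}(i) that does the real work.
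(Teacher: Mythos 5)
Your proposal is correct and follows essentially the same route as the paper's proof: reduce case (ii) to case (i) via the contrapositive of Lemma~\ref{le:commutatorlemma}(ii)(a), obtain a common product basis from Lemma~\ref{le:commutatorlemma}(i), and locally dephase an arbitrary compatible global state in that basis, noting that the three reductions are unchanged because they are already diagonal there. Your extra bookkeeping on why the dephasing commutes with the partial traces is a harmless elaboration of the same argument.
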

\begin{proof} The fact that all three single-system reductions are not degenerate implies, by Lemma~\ref{le:commutatorlemma}.(iia), that all the commutators $\Delta_{ij,kl}$ defined in Lemma~\ref{le:commutatorlemma} vanish. By Lemma~\ref{le:commutatorlemma}.(i) we have that $\r_A$, $\r_B$, and $\r_C$, are diagonal
in the orthonormal bases $\{\ket{a_i}\}$, $\{\ket{b_j}\}$,
and $\{\ket{c_k}\}$, respectively, in which $\r_{AB}$, $\r_{BC}$, and $\r_{AC}$ are explicitly classical.
Most importantly, we have
 \begin{gather}
\r_{XY} = \sum_{ij} \proj{x_i y_j} \r_{XY} \proj{x_i y_j},
 \end{gather}
with $x,y \in \{a,b,c\}$ and $X,Y \in \{A,B,C\}$. Let $\r_{ABC}$ be any tripartite state with which the three two-body reductions are compatible. Then also the fully classical tripartite state
\begin{gather}
\s_{ABC} = \sum_{ijk} \proj{a_i b_j c_k} \r_{ABC} \proj{a_i b_j c_k}
\end{gather}
has bipartite reduced density matrices $\r_{AB}$, $\r_{BC}$, and $\r_{AC}$.
\end{proof}


Given Theorem \ref{thm:commuteimpliesclassical}, in order to construct an example where $\r_{AB}$, $\r_{AC}$,
and $\r_{BC}$ are all fully classical but not compatible with any fully-classical state, we have first of all to construct an example where   $\r_{AB}$, $\r_{AC}$,
and $\r_{BC}$ are classical but do not commute with each
other. For this,  we will need the following lemma. We recall that for a bipartite state $\r$ acting on the Hilbert space $\cH_A \ox
\cH_B$, the partial transpose computed in the standard orthonormal
basis $\{\ket{i}\}$ of system $A$, is defined by $\r^\G
=\sum_{ij}\ketbra{j}{i}\ox\bra{i}\r\ket{j}$. One can similarly
define the partial transpose $\G_{B}$ on the system $B$.

\bl \label{le:classicalMAIN}
Consider three classical-classical two-qubit states
\begin{widetext}
\begin{align} \label{eq:classical2qubit1}
\r_{AB} =& p(\proj{00}+\proj{11}) + (1/2-p)(\proj{01}+\proj{10}),\\
\label{eq:classical2qubit2}
\r_{BC} =& q(\proj{b_0,0}+\proj{b_1,1}) + (1/2-q)(\proj{b_0,1}+\proj{b_1,0}),\\
\label{eq:classical2qubit3}
\r_{AC} =& r(\proj{a_0,c_0} + \proj{a_1,c_1}) + (1/2-r) (\proj{a_0,c_1} +
\proj{a_1,c_0}),
\end{align}
\end{widetext}
where $p,q,r\in(0,1/4)$, and any one of $\{\ket{a_i}\}$, $\{\ket{b_i}\}$,
$\{\ket{c_i}\}$ is a real and orthonormal basis in $\bC^2$. Let
\begin{align}
\r_{ABC} =& -\frac14 \openone_A \otimes \openone_B \otimes \openone_C \notag\\
&+ \r_{AB} \ox \frac{\openone_C}{2} + \r_{AC} \ox \frac{\openone_B}{2} + \r_{BC}
\ox \frac{\openone_A}{2}.\label{eq:classicalMAIN}
\end{align}
Then:
\begin{enumerate}[(i)]
\item If $\r_{ABC}\ge0$ then $\r_{ABC}$ is separable with respect to to the partition $A:BC$, $B:AC$ and $C:AB$.
\item $\r_{AB}$, $\r_{BC}$ and $\r_{AC}$ are compatible if and only if they are compatible with the biseparable state $\r_{ABC}$ in \eqref{eq:classicalMAIN}.
\item Suppose $\r_{AB}$, $\r_{BC}$ and $\r_{AC}$ are compatible. They are compatible with a fully separable state if and only if $\r_{ABC}$ is fully separable.
\end{enumerate}
\el
\bpf
(i) One may directly verify that the state is invariant under partial transposition with respect to any system, i.e., $\r^{\G_X}=\r$ for $X=A,B,C$. Since  $\r_{ABC}\ge0$, the assertion follows from~\cite[Theorem 2]{kck00}.

(ii) The ``if'' part is trivial; let us prove the ``only if'' part.
Suppose the bipartite marginals $\r_{AB}$, $\r_{BC}$ and $\r_{AC}$
are compatible with a tripartite state $\r'_{ABC}$. Since
$\r_{AB}$, $\r_{BC}$, and $\r_{AC}$ are real, they are also
compatible with the real state $(\rho'_{ABC}+\rho'^*_{ABC})/2$, so
we can assume that $\rho'_{ABC}$ is real without loss of generality.
By Eq.~(1), there is a Hermitian matrix $\chi_{ABC}$
such that
\[
\r'_{ABC}=\r_{ABC}+\chi_{ABC}.
\]
Since in our case both $\rho$ and $\rho'$ are real, also $\chi$ is real.
It follows from Eqs.~\eqref{eq:classical2qubit1}--\eqref{eq:classical2qubit3}, and the fact that $\{\ket{a_i}\}$, $\{\ket{b_i}\}$,
$\{\ket{c_i}\}$ are real and orthonormal bases, that $\r_{AB}$, $\r_{BC}$ and $\r_{AC}$ are invariant under the local unitary $\s_y\ox\s_y$. So they are compatible with the state
\begin{equation}
\label{eq:compatiblemix}
\begin{split}
&\frac{1}{2}\bigg(\r'_{ABC}+(\s_y\ox\s_y\ox\s_y) \r'_{ABC} (\s_y\ox\s_y\ox\s_y) \bigg)\\
=&\frac{1}{2}\bigg(\r_{ABC}+(\s_y\ox\s_y\ox\s_y) \r_{ABC} (\s_y\ox\s_y\ox\s_y) \bigg)\\
 &+\frac{1}{2}\bigg(\chi_{ABC}+(\s_y\ox\s_y\ox\s_y) \chi_{ABC} (\s_y\ox\s_y\ox\s_y)  \bigg)\\
 =& \r_{ABC}\\
 &+\frac{1}{2}\bigg(\chi_{ABC}+(\s_y\ox\s_y\ox\s_y) \chi_{ABC} (\s_y\ox\s_y\ox\s_y)  \bigg),
\end{split}
\end{equation}
where we have used that, from~\eqref{eq:classicalMAIN},
\begin{gather}
(\s_y\ox\s_y\ox\s_y)\r_{ABC}(\s_y\ox\s_y\ox\s_y) = \r_{ABC} .\notag
\end{gather}
We will now argue that, for a real $\chi$,
\begin{equation}
\label{eq:vanishingchi}
\chi_{ABC}+(\s_y\ox\s_y\ox\s_y) \chi_{ABC} (\s_y\ox\s_y\ox\s_y)=0,
\end{equation}
so that \eqref{eq:compatiblemix} proves that, for a physical state $\rho'$, $\rho$ is also physical, as it corresponds to the convex combination of physical states.
The starting point in proving \eqref{eq:vanishingchi} is to observe that every three-qubit correlations matrix $\chi$ is by definition the linear combination of traceless Pauli matrices, i.e.
\begin{equation}
\label{eq:chi}
\chi = \sum_{i,j,k=1}^3 \chi_{ijk} \sigma_i \ot \sigma_j \ot \sigma_k.
\end{equation}
Since $\chi$ is Hermitian, all coefficients $ \chi_{ijk} $ are real. Moreover, for a real (and hence symmetric) $\chi$ only terms with an even number of $\sigma_2=\sigma_y$ are present in the expansion, because $(\sigma_2)^T=-\sigma_2$, while $\sigma_1=\sigma_x$ and $\sigma_3=\sigma_z$ are symmetric. On the other hand, $\sigma_2 \sigma_m \sigma_2 = -\sigma_m$ for $m=1,3$, while, obviously, $\sigma_2 \sigma_2 \sigma_2 = \sigma_2$. Since each non-zero term in the expansion~\eqref{eq:chi} of $\chi$ contains an even number of $\sigma_2$'s, it will change sign after conjugation by $\sigma_2\ot\sigma_2\ot\sigma_2$, i.e.,
\[
\begin{split}
&(\s_y\ox\s_y\ox\s_y) \chi_{ABC} (\s_y\ox\s_y\ox\s_y)\\
=&\sum_{i,j,k=1}^3 \chi_{ijk} (\s_y\ox\s_y\ox\s_y) \sigma_i \ot \sigma_j \ot \sigma_k(\s_y\ox\s_y\ox\s_y)\\
=&\sum_{i,j,k=1}^3 \chi_{ijk} (\s_y\sigma_i \s_y)\ot (\s_y\sigma_j \s_y)\ot (\s_y\sigma_k\s_y)\\
=& - \sum_{i,j,k=1}^3 \chi_{ijk} \sigma_i \ot \sigma_j \ot \sigma_k\\
=& - \chi_{ABC}.
\end{split}
\]
As argued, this implies $\r_{ABC}\ge0$, with biseparability following from (i).

(iii) The ``if'' part follows from (ii), let us prove the ``only if'' part. Suppose $\r_{AB}$, $\r_{BC}$ and $\r_{AC}$ are compatible with a fully separable state $\r'_{ABC}$. From \eqref{eq:compatiblemix}, $\r_{ABC}$ is the convex sum of a few fully separable states. So the assertion follows. This completes the proof.
\epf

We are now ready to present our example where $\r_{AB}$, $\r_{AC}$,
and $\r_{BC}$ are all fully classical but not compatible with any fully-classical state.

\bex \label{eg:commutatorex}
\rm{Consider the three-qubit state
\begin{align}
 \r_{ABC}(q) =& \frac{1}{8} (\openone_A \ox \openone_B \ox \openone_C + q \;
\openone_A \ox \sigma_1 \ox \sigma_1 \nonumber\\ &+ q \; \sigma_2
\ox \openone_B \ox \sigma_2 + q \; \sigma_3 \ox \sigma_3 \ox
\openone_C),
 \label{ea:nondegenerate4}
\end{align}
where $\s_i$, with $i=1,2,3$ are the Pauli matrices and
$\frac{-1}{\sqrt{3}} \le q \le \frac{1}{\sqrt{3}}$ (for $q$ outside
of this interval the matrix $\r_{ABC}(q)$ is not positive
semi-definite), $q\neq 0$. It is not hard to see that each of the
bipartite marginals states is fully classical but with respect to
different bases. A quick method to verify this assertion is using
Theorem 1 in~\cite{ccm11}. Moreover the reductions of $\r_{ABC}(q)$
do not commute with each other. Part (ii) of
Lemma~\ref{le:commutatorlemma} implies that there is no tripartite
fully classical state that is compatible with these bipartite
marginals: $\mathcal{C} (\r_{ABC}(q)) \cap \mathcal{S}_\textrm{FS} =
\emptyset$.
Surprisingly, we can find a value of  $q$ for which the fully
classical marginals in fact require some global entanglement.
Consider the state $\o_{ABC} = \r_{ABC} (q=1/\sqrt{3})$. The range
of $\o_{ABC}$ does not contain any fully factorized pure state,
hence it cannot be fully separable. Now, up to a local unitary,
the classical bipartite marginals $\o_{AB}$, $\o_{AC}$, $\o_{BC}$,
and the entangled state $\o_{ABC}$ can be written
as~\eqref{eq:classical2qubit1}-\eqref{eq:classicalMAIN},
respectively. It follows from Lemma \ref{le:classicalMAIN}.(iii)
that $\o_{AB}$, $\o_{AC}$, and $\o_{BC}$ cannot be compatible with
any fully separable state.
%
%
%
}
\eex

In Appendix~\ref{sec:appuni}, which focuses on the uniqueness of global states with fixed two-body reductions, we provide also an example of classical reductions  compatible with a \emph{unique} global state that is not fully classical, although fully separable.

\subsection{Classical two-qubit states do not require genuine tripartite entanglement} 

Although, as we have just seen, compatible classical marginals may require global quantum
correlations or even entanglement, it turns out that for the case of 
\emph{three qubits} they will never require the global state to be
genuinely multipartite entangled.

To see this, will need an additional lemma.

\bl \label{le:nongenuine}
Let $\{\ket{a_i}\}$, $\{\ket{b_i}\}$ be an orthonormal basis on $\cH_A$ and $\cH_B$, respectively. Suppose the bipartite marginals $\r_{AB}$, $\r_{AC}$ and $\r_{BC}$ are compatible. They are compatible to a non-genuinely entangled tripartite state when one of the following conditions is satisfied: (i) $\r_{AB}=\sum_i p_i \proj{a_i}\ox\r_i$ and $\r_{AC}=\sum_i q_j \proj{a_i}\ox\s_i$.  (ii) $\r_{AB}=\sum_i r_i \proj{a_i,b_i}$.
\el

\begin{proof}
Suppose $\r_{AB}$, $\r_{AC}$ and $\r_{BC}$ are compatible with $\r_{ABC}$. If hypothesis (i) is satisfied, then they are also compatible with the state $\sum_i \proj{a_i}_A \ox \openone_{BC} \r_{ABC}\proj{a_i}_A \ox \openone_{BC}$, which is biseparable. On the other hand if hypothesis (ii) is satisfied, then it follows from~\cite{cw04} that there is a quantum channel $\L$ on $\cH_C$ such that $\r_{ABC}=\L(\proj{\psi})$, for $\ket{\psi}=\sum_i \sqrt{r_i} \ket{a_i,b_i,i}$. So we obtain $\r_{AC}=\L(\sum_i r_i \proj{a_i,i})$ and $\r_{BC}=\L(\sum_i r_i \proj{b_i,i})$. Then $\r_{AB}$, $\r_{AC}$ and $\r_{BC}$ are compatible with the fully separable state $\L(\sum_i r_i \proj{a_i,b_i,i})$. This completes the proof.
\end{proof}

Now we are ready to give the proof of the following.

\bt \label{thm:classical}
Any three compatible classical-classical two-qubit states $\r_{AB}$, $\r_{BC}$ and $\r_{AC}$ are compatible with a tripartite biseparable state.
\et

\begin{proof}
Suppose some fully classical
bipartite marginals $\r_{AB}$, $\r_{BC}$ and $\r_{AC}$ are only
compatible with genuinely entangled states $\r_{ABC}$. Then the
one-party reduced density operators $\r_A,\r_B,\r_C$ have to be the
maximally mixed states, $\openone/2$. Indeed, if, without loss of
generality in the argument, $\rho_A$ is non-degenerate, the
two-party reduced states would also be compatible with a global
state given by the a locally (on $A$) dephased version of
$\r_{ABC}$, which would be separable in $A:BC$, leading to a
contradiction. Thus, up to local unitaries, we may assume $\r_{AB} =
p\proj{00} + x\proj{01} + y\proj{10}+z\proj{11}$ where $p+x+y+z=1$
and $0\le p\le 1/4$. Since $\r_A=\r_B=\r_C=\openone/2$, we have
$p+x=p+y=y+z=1/2$. So we obtain $x=y,p=z\in[0,1/4]$. Since
$\r_{ABC}$ is genuinely entangled, the cases $p=1/4$ and $p=0$ are
excluded by Lemma~\ref{le:nongenuine}.(i) and~\ref{le:nongenuine}.(ii), respectively. So
we obtain $\r_{AB}$ as Eq.~\eqref{eq:classical2qubit1}. By similar
arguments and performing suitable diagonal local unitary gates on
systems $A,B$, the classical-classical two-qubit states $\r_{BC}$
and $\r_{AC}$ can be simplified to the forms
Eqs.~\eqref{eq:classical2qubit2} and \eqref{eq:classical2qubit3},
respectively. Meantime, $\r_{AB}$ is unchanged. Since $\r_{AB}$,
$\r_{BC}$ and $\r_{AC}$ are compatible, it follows from
Lemma~\ref{le:classicalMAIN}.(ii) that they are compatible with a
biseparable state. It gives us a contradiction. So there are no
compatible bipartite marginals that are only compatible with genuine
entangled states. This completes the proof.
\end{proof}

\section{Separable reductions can imply genuine multipartite
entanglement}
\label{sec:genuine}

We have seen that the condition of classicality of marginals is strong enough to exclude the need for global genuine multipartite entanglement. We will now construct non-classical separable marginals that are only compatible with global genuine multipartite entanglement, but
first we need to establish some more definitions and notation.

We set $d_A=\dim\cH_A$, $d_B=\dim\cH_B$ and $d_C=\dim\cH_C$.
We denote $r(M)$, $\cR(M)$ the rank and range of any square matrix $M$, respectively. A quantum state is a positive semidefinite linear operator $\r:\cH\to\cH$ with $\tr\r=1$. We say $\r_{ABC}$ is a $m\times n\times l$ state by meaning that the reduced density operators satisfy $r(\r_A)=m$, $r(\r_B)=n$, $r(\r_C)=l$.
The ranks of the reduced density operators of $\r_{ABC}$ are invariant when we perform an invertible local operator (ILO) on $\r_{ABC}$. That is, let $A=\bigox^3_{i=1} A_i \in \GL: = \GL_{d_A} (\bC) \times \GL_{d_B} (\bC) \times \GL_{d_C} (\bC)$ such that $\s=A\r A^\dg$. Then $r(\r_X)=r(\s_X)$, $r(\r_{XY})=r(\s_{XY})$ and $r(\r)=r(\s)$ where $X,Y=A,B,C$. We also denote $\ket{a^*}$ as the vector whose components are the complex conjugate of those of $\ket{a}$. So $\ket{a}$ is real when $\ket{a}=\ket{a^*}$.

Evidently
$r(\r^\G)=r(\r^{\G_B})$, where $\G$, we recall, denotes partial transposition. We call the integer pair $(r(\r),r(\r^\G))$
the \textit{birank} of $\r$, and the two integers may be different.
For such examples of two-qubit and qubit-qutrit separable states, we
refer the readers to~\cite[Table I, II]{cd12paper9}. Furthermore, we
say that $\r$ is a PPT [NPT] state if $\r^\G\ge0$ [$\r^\G$ has at
least one negative eigenvalue]. Evidently, a separable state must be
PPT. The converse is true only if $mn\le6$~\cite{peres96, hhh96}.

We say a bipartite state $\r_{AB}$ is \textit{$A$-finite} when for any subspace $H\su\cH_A$, $\dim H>1$ and any state $\ket{x}\in\cH_B$, it holds that $H\ox\ket{x}\not\su \cR(\r_{AB})$. In other words, $\r_{AB}$ is not $A$-finite when $\cR(\r_{AB})$ contains a 2-dimensional subspace spanned by $\ket{a_1,x},\ket{a_2,x}$ with some linearly independent states $\ket{a_1},\ket{a_2}$. So if $\r_{AB}$ is not $A$-finite, there must be infinitely many product states in $\cR(\r_{AB})$. 

Besides these notions and notation, we will need also the following lemma.

\bl \label{le:symmetricAB}
Suppose the bipartite marginals $\r_{AB}$, $\r_{BC}$ and $\r_{AC}$ are compatible with $\r_{ABC}$, and $\r_{AB}$ is $A$-finite and $B$-finite. Then $\r_{ABC}$ is either separable with respect to the partition $AB:C$, or genuinely multipartite entangled.
\el

\bpf Suppose $\r_{ABC}$ is biseparable, so $\r_{ABC} = p \a_{A:BC} + q \b_{B:AC} + (1-p-q) \g_{C:AB}$, with $\alpha_{A:BC}$ separable in the $A:BC$ partition (similarly for $\beta_{B:AC}$ and $\gamma_{C:AB}$. We argue that $\a_{A:BC}$ is fully separable, and a similar argument will apply to $\b_{B:AC}$. Let $\a_{A:BC}=\sum_i p_i \proj{a_i}_A \ox \proj{\ps_i}_{BC}$. Since $\r_{AB}$ is $B$-finite and $\ket{a_i}\ox \cR(\tr_C \proj{\ps_i}) \su \cR(\r_{AB})$, any $\ket{\ps_i}$ must be a product state. So  $\a_{A:BC}$ is fully separable. Similarly one can show that $\b_{B:AC}$ is also fully separable, so $\r_{ABC}$ is separable with respect to $AB:C$.
\epf


We are now in the position to prove the following.

\bt \label{thm:main}
Suppose the triple $\mathcal{E}=(\r_{AB},\r_{BC},\r_{AC})$ is compatible. Then, $\mathcal{C}(\mathcal{E})\cap \mathcal{S}_\textup{BS}=\emptyset$  if all the following conditions are met: (i) for any $i,j\in\{A,B,C\}$, the state $\r_{ij}$ is $i$-finite and $j$-finite; (ii) $\r_{BC}$ has birank $(r(\r_B) + 1, r(\r_B)+1)$; (iii) $\r_{AB}$ has birank $(r,s)$, $r\ne s$.
\et

\begin{proof} Suppose $\r_{AB}$, $\r_{BC}$ and
$\r_{AC}$ are compatible with a biseparable state $\r_{ABC}$. By
hypothesis (i) and Lemma~\ref{le:symmetricAB}, we obtain that
$\r_{ABC}$ is separable with respect to the partition $AB:C$. Let
$\r_{ABC} = \sum^{n-1}_{i=0} p_i \r_i \ox \proj{c_i}$ where $p_i>0$,
the $\ket{c_i}$ on $\cH_C$ are pairwise linearly independent. Note
that the product subspace
$\cR((\r_i)_B)\ox\ket{c_i}\su\cR(\r_{BC})$, $\forall i$. This fact
and (i) imply that $r((\r_i)_B)=1$, $\forall i$. By similar
arguments we have $r((\r_i)_A)=1$. We may assume
$\r_i=\proj{a_i,b_i}$. Hence $\r_{ABC}=\sum^{n-1}_{i=0} p_i
\proj{a_i,b_i,c_i}$. By (ii) we have $n\ge d+1$ where $d=r(\r_B)$.
Without loss of generality, we may assume that the states
$\ket{b_i}$, $i=0,\cdots,d-1$ span $\cR(\r_B)$. We choose a suitable
ILO $V$ such that $ V\ket{b_i} \propto \ket{i}, i=0,\cdots,d-1$,
$V\ket{b_i} \propto \ket{f_i},i\ge d$ and $\ket{f_d}$ is real. By
performing $V$ on the state $\r_{ABC}$, we have
 \begin{align}
 \s_{ABC}
 =& (I\ox V\ox I)\r_{ABC}(I\ox V\ox I)^\dg
 \notag\\
 =& \sum^{d-1}_{i=0} q_i \proj{a_i,i,c_i} + \sum^{n-1}_{i=d} q_i
 \proj{a_i,f_i,c_i}, \notag
 \end{align}
and $q_i>0$ for any $i$. Since the operation $V$ does not change the rank of quantum states, it follows from (ii) that $\s_{BC}$ has birank $(d+1, d+1)$. Recall that the $\ket{c_i}$ are pairwise linearly independent. Since (i) is not changed under ILOs, it follows from (i) that $\ket{f_d}$ is not parallel to any state $\ket{i}$---otherwise $\rho_{BC}$ would not be $B$-finite. Since $\ket{f_d}$ is real, the two $(d+1)$-dimensional subspaces $\cR(\s_{BC})$ and $\cR(\s_{BC}^\G)$ are equal and spanned by $\ket{i,c_i},i = 0, \cdots,d-1$ and $\ket{f_d,c_d}$. So the states $\ket{f_i,c_i}, \ket{f_i^*,c_i} \in \cR(\s_{BC})$, for any $i>d$. Then (i) implies that these $\ket{f_i}$ are real up to an overall phase. So $\s_{AB} = \s_{AB}^{\G_B}$. It implies $r(\r_{AB})=r(\r_{AB}^\G)$ which is a contradiction with (iii).
Therefore $\r_{AB}$, $\r_{BC}$ and $\r_{AC}$ are not compatible to any non-genuinely entangled state. This completes the proof.
\end{proof}

We will now make use of Theorem~\ref{thm:main} and offer an example of separable marginals only compatible with genuinely entangled tripartite states.
\bex \label{ex:main}
\rm{Consider the family of  rank-$(d+1)$ states on $\mathbb{C}^{d+1}\otimes \mathbb{C}^{d+1}\otimes \mathbb{C}^{d+1}$ given by
$\r_{ABC} = p_1 \s_{ABC} + \sum^{d}_{m=2} p_m\proj{mmm}$,
with
\begin{gather} \label{ea:essentialTRI}
\s_{ABC} = \frac{2}{3} \ket{\xi}\bra{\xi} +\frac{1}{3} \proj{111},
\end{gather}
$\ket{\xi}=\frac12\ket{010} + \frac12\ket{100} + \frac{1}{\sqrt2} \ket{001}$, $p_1>0$, and $p_m\ge0$. It is easy to see that the only biseparable pure state in $\cR(\s_{ABC})$ is $\ket{111}$. The bipartite reduced density operators of $\s_{ABC}$ are
\begin{align}
& \s_{AB} = \frac{1}{3}\ket{\Phi^+}\bra{\Phi^+} + \frac13\proj{00} +
\frac13\proj{11}, \label{ea:rhoAB}\\
& \s_{BC} = \s_{AC} = \frac{1}{2} \ket{\zeta}\bra{\zeta}+ \frac16\proj{00} +
\frac13\proj{11}, \label{ea:rhoAC,BC}
\end{align}
with $\ket{\Phi^+}=\frac{1}{\sqrt{2}}(\ket{01} + \ket{10})$ and $\ket{\zeta} = \sqrt{\frac{2}{3}}\ket{01} + \sqrt{\frac{1}{3}} \ket{10}$.

The three two-qubit marginals $\s_{AB}$, $\s_{BC}$ and $\s_{AC}$ are positive under partial transposition (PPT), so they are separable~\cite{hhh96}. Hence,  $\r_{AB}$, $\r_{BC}$ and $\r_{AC}$ are separable too; they also evidently satisfy  condition (i) of Theorem~\ref{thm:main}. Furthermore, $\r_{BC}$ has birank $(d+2,d+2)$, while $r(\r_B)=d+1$, and $\r_{AB}$ has birank $(d+2,d+3)$. So also conditions (ii) and (iii) of Theorem~\ref{thm:main} are satisfied, and we conclude that $\r_{AB}$, $\r_{BC}$ and $\r_{AC}$ are only compatible with genuinely tripartite entangled states.}
\eex

The example shows that for any fixed local dimension $d$, there
exist triples of  two-qudit separable states that are only
compatible with genuine multipartite entanglement. The ``core'' of
our construction is the genuine multipartite entangled three-qubit
state $\sigma_{ABC}$ of Eq.~\eqref{ea:essentialTRI}.
It turns out that $\sigma_{ABC}$ is actually  the only state
compatible with its reductions. The proof of this is given in the Appendix.
It is worth comparing this with the results
of~\cite{LPW2002}. There it was proven that for almost all pure
entangled states of three qubits $\ket{\eta}$ it holds
$\cC(\proj{\eta}) = \{\proj{\eta}\}$, with the exception of states
of the generalized GHZ form
$\ket{\textrm{gGHZ}}=\sqrt{p}\ket{000}+\sqrt{1-p}\ket{111}$ (up to
local unitary transformations), which satisfy, e.g.,
$\{\proj{\textrm{gGHZ}} ,p \proj{000}+(1-p)
\proj{111}\}\subset\cC(\proj{\textrm{gGHZ}})$. Interestingly, the
only three-qubit pure states that have separable reduction are of
the generalized GHZ form~\cite{thapliyal99}. This implies that any
three-qubit state $\rho$ such that (i) its reductions are separable
and (ii) $\cC(\rho)\cap \cS_{\textrm{BS}} = \emptyset$, must be
mixed. Since the state $\sigma_{ABC}$  has rank two, we can think of
it as the simplest possible example that satisfies (i) and (ii),
with the additional property of being uniquely determined by its
reductions. We generalize Example~\ref{ex:main} in several ways, all
presented in the Appendix.

\subsection{Genuine multipartite entanglement from separable reductions is a robust feature}

While we showed that there exist genuine multipartite states whose compatibility set contains only genuine multipartite states,
 it is natural to ask how common this phenomenon is, i.e., whether such states have finite volume in the set of all states. This is important also from the point of view of the potential realization of such states in the lab, which can never be perfect. We answer this question in the affirmative.

We introduce a parameter of compatibility  of a tripartite state
$\rho_{ABC}$ with $\cE=(\s_{AB},\s_{BC},\s_{AC})$ as $
D(\rho_{ABC}|\cE) := \norm{\r_{AB} - \s_{AB}}^2_2+ \norm{\r_{BC} -
\s_{BC}}^2_2+ \norm{\r_{AC} - \s_{AC}}^2_2 $, where we have used the
Hilbert-Schmidt norm $\|X\|_2=\sqrt{\tr(X^\dagger X)}$~\footnote{We
make this choice for the sake of concreteness, but  our argument is
only based on continuity of $D$ in its arguments and the fact that
$D(\rho_{ABC}|\cE)$ is positive and vanishes if an only if
$\rho_{ABC}\in\cC(\cE)$.}. We further define
$D_{\textrm{BS}}(\cE):=\min_{\r \in \cS_{\textrm{BS}}} D(\rho_{ABC}|
\cE)$. We have $D_{\bisep} (\cE)>0$ for any triple $\cE$ such that
$\cC(\cE)\cap \cS_{\bisep}=\emptyset$, even if the triple of reduced
states is compatible, as in Example~\ref{ex:main}. Finally, given a
tripartite state $\sigma_{ABC}$, we define $D(\rho_{ABC}|\s_{ABC}):=
D(\rho_{ABC}|(\s_{AB},\s_{BC},\s_{AC}))$ and
$D_{\bisep}(\s_{ABC}):=\min_{\r \in
\cS_{\bisep}}D(\rho_{ABC}|\sigma_{ABC})$.

Now, 
consider a genuinely entangled multipartite state $\bar\sigma_{ABC}$ with separable reductions such that $D_{\bisep}(\bar\s_{ABC})>0$, and the convex combination of $\bar\sigma_{ABC}$ with an arbitrary  fully separable state $\rho^{\FS}$: $ \tau_p(\bar\sigma_{ABC}, \rho^{\text{\FS}}):=(1-p) \bar\sigma_{ABC}+p \rho^{\text{\FS}}, $ for $0\leq p\leq 1$. Since the set of biseparable states is closed, there exists $\bar{p}>0$ such that $\tau_p(\bar\sigma_{ABC},\rho^{\FS})$ is genuine multipartite entangled for all $\rho^{\FS}$ and all $0\leq p <\bar{p}$. Since $\rho^{\FS}$ is fully separable, so are the two-party reduced states of $\tau_p(\bar\sigma_{ABC},\rho^{\text{\FS}})$. Furthermore, since $D$ is continuous, there exists $\bar{p}_D>0$ such that $D_{\bisep} (\tau_p (\bar\sigma_{ABC}, \rho^{\FS}))>0$ for all $\rho^{\FS}$ and all $0\leq p <\bar{p}_D$.

%

For any local finite dimensions, the set of fully separable states has non-zero volume among all states, because there exists a ball of fully separable states around the maximally mixed state~\cite{gurvitsbarnum}. Thus, the argument above proves that also the set of tripartite states whose two-party marginals are separable but only compatible with genuine multipartite entanglement has non-zero volume.

\section{Conclusions}
\label{sec:conclusions}

We analyzed the relation between the character of correlations of tripartite states and the ones exhibited by their bipartite reductions, i.e., a version of the quantum marginal problem that focuses on the compatibility of bipartite reductions with certain global properties. We constructed examples where separable reductions are only compatible with genuine multipartite entanglement. Up to our knowledge, this is the ``largest'' known separation between the character of correlations of bipartite reductions and what can be inferred about the quality of correlations of the global state, based only on the knowledge of the reductions. On the other hand, at least for qubits we were able to prove that compatible reductions that are fully classical can always originate from a biseparable global state. Nonetheless, bipartite reductions that are fully classical may still require the presence of some entanglement in the global state. Our results show that the relation between global and ``local'' correlations is far from trivial. Notably, the notion of fully classical correlations is strong enough to ``break'' the need for genuine multipartite entanglement, but not the potential need for global entanglement altogether. An interesting open question is whether compatible classical-classical marginals in high dimension are always compatible with a biseparable tripartite state. Another question is how to quantitatively bound the certifiable genuine multipartite entanglement in terms of the non-classicality of the two-body reductions, at least in the three-qubit case. The latter problem is reminiscent of the case of entanglement distribution, where the non-classicality of correlations---rather than the entanglement---present between a quantum carrier and distant labs constitutes a bound on the entanglement that can be generated between the labs by exchanging the carrier~\cite{chuanetal2012, streltsovetal2012}.

\section*{Acknowledgments}

We are grateful to A. Ac{\`i}n and O.
G{\"u}hne for useful discussions. LC started this work when at the Institute for Quantum Computing \& Department of Pure Mathematics,
University of Waterloo, and was mainly supported by MITACS
and NSERC; LC's research was funded in part by the Singapore
National Research Foundation under NRF Award NRF-NRFF2013-01. OG is grateful for the support of the Austrian Science
Fund (FWF) and Marie Curie Actions (Erwin Schr\"odinger Stipendium
J3312-N27). KM was supported by the John Templeton Foundation,
National Research Foundation and the Ministry of Education
(Singapore) during the completion of this work. MP acknowledges support from NSERC, CIFAR, and Ontario
Centres of Excellence.

\appendix

\section{On the uniqueness of global states compatible with given reductions}
\label{sec:appuni}

We first prove that $\sigma_{ABC}$ in Eq.~\eqref{ea:essentialTRI} in the
paper is the only state compatible with its reductions, a fact of
interest in its own.
\bpp \label{le:ENTunique} For $\sigma_{ABC}$ in Eq.~\eqref{ea:essentialTRI} it holds
$\mathcal{C}(\sigma_{ABC})=\{\sigma_{ABC}\}$. \epp

\bpf  Suppose $\r=\r_{ABC}$ has the same reductions as
$\sigma_{ABC}$, i.e., $\rho\in\mathcal{C}(\sigma_{ABC})$. We can
always write its spectral decomposition as $\r = \sum^7_{i=0} p_i
\proj{\ps_i}$, where $\ket{\ps_i} = \sqrt{q_i} \ket{0,\a_i} +
\sqrt{1-q_i}\ket{1,\ph_i}$, and $\ket{\a_i}, \ket{\ph_i} \in \cH_B
\ox \cH_C$, $\forall i$. We have $\s_{BC} = \sum_i p_i (q_i
\proj{\a_i} + (1-q_i) \proj{\ph_i})$. It follows from Eq.~(5) in the
paper that $r(\s_{BC})=3$. So any four states of
$\ket{\a_i},i=0,\cdots,7$ are linearly dependent. Using the freedom
in the choice of the pure-state ensemble representation of a mixed
states~\cite{hughston1993}, we can choose a suitable linear
combination of $\ket{\ps_i}$, $i=0,1,2,3$, such that it is equal to
$\ket{1}_A\ket{\ph_3'}_{BC}$. So the state can be written as $\r =
\sum^3_{i=0} r_i \proj{\ps_i'}+\sum^7_{i=4} p_i \proj{\ps_i}$ where
$\ket{\ps_3'}=\ket{1}_A\ket{\ph_3'}_{BC}$. By applying this
procedure to another four states
$\ket{\ps_0'},\ket{\ps_1'},\ket{\ps_2'},\ket{\ps_j}$ with
$j=4,5,6,7$ respectively, we can realize
$\ket{\ps_j}=\ket{1}_A\ket{\ph_j'}_{BC}$.

By relabeling the states, we can write $\r = \sum^2_{i=0} p_i'
\proj{\ps_i} + p_3'\proj{1}\ox\r_0$ with $\r_0$ on $\cH_B\ox\cH_C$.
We have $\cR(\proj{1}\ox(\r_0)_B)\su\cR(\s_{AB})$ and
$\ket{11}\in\cR(\s_{AB})$ by Eq.~\eqref{ea:rhoAB}. Since $\s_{AB}$
is X-finite for $X=A,B$, we have $(\r_0)_B=\proj{1}$. By the similar
argument we can show $(\r_0)_C=\proj{1}$. So we have $\r =
\sum^2_{i=0} p_i' \proj{\ps_i} + p_3' \proj{111}$.

Let $\ket{\ps_i}=\sum^1_{j,k,l=0}c_{i,m}\ket{jkl}$ where $i=0,1,2$
and $m = 4j + 2k + l$. By Eq.~\eqref{ea:rhoAB} we have
$c_{i2}=c_{i4}$, $c_{i3} = c_{i5}$. By Eq.~\eqref{ea:rhoAC,BC}, we have
$c_{i2}=c_{i1}/\sqrt2$ and $c_{i6} = c_{i5}/\sqrt2$. These equations
imply for $i=0,1,2$, we have
\begin{widetext}
\begin{gather}
\ket{\ps_i} = c_{i0}\ket{000} + c_{i1} \bigg( \ket{001}
+\frac{1}{\sqrt2}\ket{010} + \frac{1}{\sqrt2}\ket{100} \bigg)
+ c_{i3} \bigg( \ket{011} + \ket{101} + \frac{1}{\sqrt2}\ket{110} \bigg)
+ c_{i7} \ket{111}.
 \end{gather}
The coefficients of $\proj{11}$ in both $\s_{AB}$ and $\s_{AC}$ are
$1/3$, so $c_{03}=c_{13}=c_{23}=0$. By replacing
$\ket{\ps_i},i=0,1,2$ by a suitable linear combination of them, we
may assume $c_{11}=c_{21}=c_{20}=0$. So the tripartite state can be
rewritten as $\r = \sum^2_{i=0} p_i'' \proj{\ps_i}$, where
\begin{gather}
\ket{\ps_0} = c_{00}'\ket{000} + c_{01}' \bigg( \ket{001}
+\frac{1}{\sqrt2} \ket{010} + \frac{1}{\sqrt2}\ket{100} \bigg) +
c_{07}'\ket{111}, \quad \ket{\ps_1} = c_{10}'\ket{000}+
c_{17}'\ket{111}, \quad \ket{\ps_2} = \ket{111}.
\end{gather}
\end{widetext}
Since $r(\s_{BC})=3$, we have $c_{01}'\ne0$. By Eq.~\eqref{ea:rhoAB}, we have $c_{01}'c_{00}'=c_{01}'c_{07}'=0$. So $c_{00}' =
c_{07}' =0$. By Eq.~\eqref{ea:rhoAB} again, we have $p_0''=\frac23$,
$\abs{c_{01}'}=\frac{1}{\sqrt2}$, and $c_{10}'=0$. Now we see
$\r=\s_{ABC}$ in Eq.~\eqref{ea:essentialTRI}. This completes the proof.
\epf

We further derive (and later use in Appendix \ref{sec:appgen}) the following lemma.

\bl \label{le:unique} Suppose $\r_{AB}$, $\r_{BC}$, $\r_{AC}$ are only compatible with a tripartite state $\r_{ABC}$, and $\s_{AB}$, $\s_{BC}$, $\s_{AC}$ are compatible with another tripartite state $\s_{ABC}$. If $\cR(\s_{ABC}) \sue \cR(\r_{ABC})$, then $\s_{ABC}$ is the only state with which $\s_{AB}$, $\s_{BC}$, $\s_{AC}$ are compatible.
\el

\bpf Suppose $\s_{AB},\s_{BC},\s_{AC}$ are compatible with another state $\s'_{ABC}\ne\s_{ABC}$. Since $\cR(\s_{ABC}) \sue \cR (\r_{ABC})$, we may find a small enough $p>0$ and a tripartite state $\a_{ABC}$, such that $\r_{ABC} = p\s_{ABC} + (1-p)\a_{ABC}$. So the bipartite reductions $\r_{AB}, \r_{BC}, \r_{AC}$ are compatible with the state $p\s_{ABC} '+ (1-p) \a_{ABC}$, which is different from $\r_{ABC}$. It gives us a contradiction.
\epf

We conclude this section presenting separable marginals that are only compatible with a unique  quantum correlated (unentangled) state.

\bpp \label{pp:SEPunique}
The separable states $\r_{AB}= \r_{BC} = \r_{AC} = p \proj{00} + (1-p) \proj{a,a}$, where $\ket{a} = \frac{1}{\sqrt2} (\ket{0}+\ket{1})$, are only compatible with the separable state $\r_{ABC} = p \proj{000} + (1-p)\proj{a,a,a}$.
\epp

\bpf
We will use the following observation in the proof and it is easy to verify. For any $X,Y\in\{A,B,C\}$, there are only two product states $\ket{00}, \ket{a,a} \in \cR(\r_{XY})$, they also span the space $\cR(\r_{XY})$. That is, any state in $\cR(\r_{XY})$ is the linear combination of $\ket{00}$ and $\ket{a,a}$.

It is clear that $\r_{AB},\r_{BC},\r_{AC}$ are compatible with $\r_{ABC}$. Suppose they are compatible with another three-qubit state $\s_{ABC} = \sum_{i} p_i \proj{\ps_i}$. By applying the observation to system $B,C$ we have $\ket{\ps_i} = f_i \ket{\a_i,00} + g_i\ket{\b_i,a,a}$ with some complex numbers $f_i,g_i$. By applying the observation to system $A,B$ we have $g_i\ket{\b_i}\propto\ket{a}$, and hence $f_i\ket{\a_i}\propto\ket{0}$. So we may assume $\ket{\ps_i} = f_i' \ket{000} + g_i'\ket{a,a,a}$. As a result, the range of the state $\s_{ABC}$ is spanned by the product states $\ket{000},\ket{a,a,a}$. By simple algebra one can see that the only feasible $\s_{ABC}$ compatible to $\r_{AB},\r_{BC}$ and $\r_{AC}$ is the convex sum of $\proj{000}$ and $\proj{a,a,a}$. By using the condition $\r_{XY}=\s_{XY}$ we obtain $\s_{ABC}=\r_{ABC}$. This completes the proof.
\epf

\section{Generalizations of Example~\ref{ex:main}}
\label{sec:appgen}

We provide here some further examples of states with separable
reductions that are only compatible with genuine multipartite
entanglement, also making use of Proposition~\ref{le:ENTunique} and Lemma
\ref{le:unique}.

Note that $\ket{111}\in\cR(\s_{ABC})$ for $\sigma_{ABC}$ in Eq.~\eqref{ea:essentialTRI}. It follows from Lemma~\ref{le:unique} that for any $p\in(0,1)$, the separable states
$p\s_{AB} + (1-p) \proj{11}$, $p\s_{AC} + (1-p) \proj{11}$, and
$p\s_{BC}+(1-p)\proj{11}$ are uniquely compatible with the state
$p\s_{ABC}+(1-p)\proj{111}$. So we have generated a family of
separable bipartite marginals which are uniquely compatible with a
genuinely entangled state, extending Example~\ref{ex:main}.

We now generalize Example~\ref{ex:main} to a different family of states that satisfy the
conditions in Theorem~\ref{thm:main}. Let $\s_{ABC}$ be as in Eq.~\eqref{ea:essentialTRI}, and the product state $\ket{a,b} \in \cR(\s_{BC})\cap
\cR(\s_{BC}^\G)$. Such product state always exists because
$r(\s_{BC}) = r(\s_{BC}^\G) = 3$, and there is a product state in
any 2-dimensional two-qubit subspace. For example, we can choose
$\ket{a,b} = ({1\over\sqrt2}, {1\over\sqrt2}) \ox (\sqrt{1\over 3},
\sqrt{2 \over 3})$. We have the following corollary now.

\bcr \label{cr:mixture} Let $\vec{p}=(p_1,\cdots,p_n)$,
$\sum^n_{i=1} p_i =1$, $p_1>0$, $p_i\ge0$. For $i>1$ suppose the
product states $\ket{a_i,b_i}\in\cR(\s_{BC})\cap \cR(\s_{BC}^\G)$
where $\ket{a_i}$ is real and $\s_{BC}$ is the reduced density
operator of the state $\s_{ABC}$ in Eq.~\eqref{ea:essentialTRI}. The three
reduced density operators of the three-qubit state $\s_{\vec{p}} =
p_1 \s_{ABC} + \sum^n_{i=2} p_i \proj{a_i,a_i,b_i}$ are only
compatible with genuinely entangled states. \ecr

\bpf It is sufficient to show that the three reduced density
operators $(\s_{\vec{p}})_{AB}$, $(\s_{\vec{p}})_{AC}$, and
$(\s_{\vec{p}})_{BC}$ satisfy the three conditions (i), (ii), and
(iii) in Theorem~\ref{thm:main}. Recall that $\s_{ABC}$ satisfies these
conditions. Since $\ket{a_i,a_i} \in \cR(\s_{AB})$ and
$\ket{a_i,b_i} \in \cR(\s_{BC})=\cR(\s_{AC})$, we have $\cR ((
\s_{\vec{p}})_{XY}) = \cR(\s_{XY})$ for any $X,Y\in\{A,B,C\}$. So
condition (i) is satisfied. Next, the same argument shows that
$(\s_{\vec{p}})_{AB}$ has birank $(3,4)$, which is exactly condition
(iii). Third, the hypothesis $\ket{a_i,b_i}\in\cR(\s_{BC})\cap
\cR(\s_{BC}^\G)$ and $\ket{a_i}$ is real imply that the birank of
$(\s_{\vec{p}})_{BC}$ is $(3,3)$. So condition (ii) is also
satisfied. \epf

\end{document}